\renewcommand{\v}[1]{#1}
\newcommand{\m}[1]{#1}
\newcommand{\Exp}[1]{{\rm E}[ \ensuremath{ #1 } ]  }
\newcommand{\Var}[1]{{\rm Var}[ \ensuremath{ #1 } ]  }
\newcommand{\Sig}{{\v \Sigma}}
\long\def\symbolfootnote[#1]#2{\begingroup%
\def\thefootnote{\fnsymbol{footnote}}\footnote[#1]{#2}\endgroup}
\title{Marginally Specified Priors for Nonparametric Bayesian Estimation}
\author[1]{David C. Kessler}
\author[2]{Peter D. Hoff}
\author[3]{David B. Dunson}
\affil[1]{Department of Biostatistics, University of North Carolina, Chapel Hill}
\affil[2]{Departments of Statistics and Biostatistics, University of Washington}
\affil[3]{Department of Statistical Science, Duke University}
\date{\today}
\begin{document}
\maketitle

\symbolfootnote[0]{Address for correspondence: \url{pdhoff@uw.edu}. 
David Kessler's work was partially supported by NIEHS
training grant T32ES007018.
Peter Hoff's work was partially supported by
NICHD grant 1R01HD067509-01A1. }

\begin{abstract}

Prior specification for nonparametric Bayesian inference
involves the difficult task of quantifying prior knowledge about a parameter of high,  often
infinite, dimension.
Realistically, a statistician is unlikely to have informed opinions
about all aspects of such a parameter, but may have real information about functionals
of the parameter, such the population mean or variance.
This article proposes a new framework for nonparametric Bayes inference
in which  the prior  distribution
for a possibly infinite-dimensional parameter
is decomposed into two parts:
an informative prior on
a finite set of functionals, 
and a
nonparametric conditional prior for the parameter given the functionals.
Such priors can be easily constructed from  standard
nonparametric prior distributions in common use, and
inherit the large support of the
standard priors upon which they are based.
Additionally, posterior approximations under these informative priors
can generally be made via minor adjustments to existing Markov chain
approximation algorithms for standard nonparametric prior distributions.
We illustrate the use of such priors in the context of multivariate density estimation using Dirichlet process mixture models, and in the  modeling of high-dimensional sparse contingency tables.

\medskip

{\noindent {\it Key Words}: contingency tables; density estimation; Dirichlet process mixture model; multivariate unordered categorical data; non-informative prior; prior elicitation; sparse data.}

\end{abstract}

\section{Introduction}

Many real-world data analysis situations do not lend themselves well to
simple statistical models indexed by a finite-dimensional parameter.
This has led to the development of a rich class of nonparametric Bayesian  (NP Bayes) methods,
the general idea of  which is to
obtain inference under a prior that has support on the entire space of relevant probability distributions \citep{ferguson_1973}.
These methods have been applied to a variety of problems, such as density estimation \citep{mulleretal_1996}, image segmentation \citep{suddjord2008}, speaker diarization \citep{foxetal_2011}, regression and classification \citep{neal_1998}, functional data analysis \citep{petroneetal_2009} and quantitative trait loci mapping \citep{zouetal_2010} to name only a few.
This breadth of applications reflects  the utility of NP Bayes methods in
modern statistical data analysis.

Many NP Bayes methods are built upon 
either the Dirichlet distribution (DD) for finite sample spaces or the 
Dirichlet process (DP)   \citep{ferguson_1973} for infinite 
sample spaces. 
For the latter case, 
the body of work on parameter estimation \citep{escobar_1994}, density estimation and inference \citep{escowest_1995} and the steady improvement in sampling methods
\citep{escobar_1994, walker_2007, yau_etal_2011, kallislice_2011} have all made the DP prior an attractive choice for many applications.
For a given sample space $\mathcal Y$,
a DD or  DP prior over distributions on $\mathcal Y$
is parameterized in terms of a ``base measure'' $Q_0$ on $\mathcal Y$  and
a ``concentration parameter'' $\alpha$.
Although samples from the DP prior are discrete with probability one,
this prior
is nonparametric in the sense that it
 has
weak support on the set of all distributions having the same support as
$Q_0$. 
Analogously, the DD prior is nonparametric in the sense that it has  support on the
entire $(|\mathcal  Y|-1)$-dimensional simplex. 
For both the DD and DP, 
a large value of $\alpha$ corresponds to a prior
concentrated near $Q_0$. For the DP, a small $\alpha$
results in distributions with probability mass concentrated on only a few points, drawn
independently from $Q_0$.
For the DD, a small $\alpha$ can result in mass being concentrated near the 
vertices of the simplex. 

For many NP Bayes methods, the DP is used as a prior
for a mixing distribution in a mixture model:
The data are assumed to come from a population with density
$p(y|Q ) = \int p(y|\psi) Q(d\psi)$, where $\{p(y|\psi) :  \psi\in \Psi\}$
is a simple parametric family.  A DP prior on $Q$ results in a
Dirichlet process mixture model (DPMM)
\citep{lo_1984,escowest_1995, maceachern_muller_1998}. As $Q$ is discrete with
probability 1, the resulting model for the population distribution
is a countably infinite mixture model, where
the parameters in the component measures  are determined by $Q_0$, and
the number of components
with non-negligible weights is increasing in $\alpha$.

Clearly, the choice of $\alpha$ and  $Q_0$ will have a significant effect on  the prior for the population density, and potentially on posterior inference.
Many applications include priors for the base measure \citep{escowest_1995,mulleretal_1996} and incorporate estimation of  $Q_0$ and $\alpha$ into the
posterior inference.
Other approaches have addressed the challenge of specifying $Q_0$ by applying empirical Bayes techniques to develop a point estimate for $Q_0$ \citep{mcauliffeetal_2006}.
In many applications, the base measure is given an overdispersed form in an attempt to avoid an unduly informative prior. Of course, doing so precludes
the incorporation of prior information into the inference.

The particular
case of the DP prior illustrates the general challenge of
incorporating prior information in a nonparametric setting.
The results of \citet{yamato_1984} and \citet{lijoi_regazzini_2004} can be extended to adjust $\alpha$ and $Q_0$ in normal DPMMs  so that the induced
prior expectation and variance of the population mean can be
approximately specified
(as will be discussed further in Section 3),
although specification beyond the population mean is problematic.
\citet{bushetal_2010} proposed a limit of Dirichlet process  approach in order to allow calibration of a minimally informative Bayesian analysis with prior information.
A central part of this effort is to compensate for an overdispersed base measure by developing techniques for setting a local mass property. This is designed to make improper base measures feasible and to address the general problem of base measure elicitation in nonparametric analysis.
\citet{moala_ohagan_2010} proposed a method to update a Gaussian process (GP) prior with expert assessments of the mean and other aspects of an unknown density. 
As with the Dirichlet process prior, the GP prior requires specification of the mean and covariance functions that characterize the GP. These provide a base for the prior in the same way that the $Q_0$ base measure does for the Dirichlet process prior.
In the \citeauthor{moala_ohagan_2010} approach, elicitation of these quantities is derived from expert assessments of quantiles of the unknown distributions.

In this paper, we propose a very general method  that allows for the
combination of
an arbitrary prior on a finite set of functionals with
a nonparametric prior on the remaining aspects of the high- or infinite-dimensional unknown parameter.
In the next section we show how such a
partially informative  prior distribution can be constructed from
the combination of any
prior distribution on the functionals of interest with
the conditional distribution of the parameter given the functionals
under a canonical nonparametric prior.
We show that the resulting marginally specified prior (MSP) inherits
desirable features from the canonical prior:
The MSP will generally share the support of the canonical
prior,
and posterior approximation under  the MSP
can typically be made via small modifications to any Markov chain Monte Carlo
algorithm applicable under the canonical prior.

In Section 3 we illustrate
the use of the marginally specified prior in the context of
multivariate density estimation using normal DPMMs.
In an example, we show that efforts to make the canonical DPMM
informative in terms of marginal means and variances can lead to
poor density estimates, whereas a noninformative DPMM can
lead to  suboptimal estimates of functionals due to its inability
to incorporate prior information. In contrast, a
marginally specified
prior is able to both incorporate prior information and provide
accurate density estimation. Additionally, for this particular example,
accurate prior information results in improved density estimation
over a canonical noninformative nonparametric prior.

In Section 4 we examine the important  problem of
NP Bayes analysis of large sparse contingency tables in 
the presence of prior information on the margins. 
In this context, we develop a  marginally specified prior from a canonical NP Bayes approach.
In an example, we illustrate how canonical NP Bayes methods designed to be informative on the margins can result in poor performance 
in terms of margin-free functionals (such as dependence functions). 
In contrast, a marginally specified prior can 
accommodate prior information about  the population margins  while 
being minimally informative about other aspects of the population, 
resulting in strong performance in terms of both marginal and margin-free 
aspects of the population. 
A discussion of the results and directions for future research follows in Section 5.

\section{Marginally specified priors: Construction and computation}
We consider the general problem of Bayesian inference for a parameter $f$
belonging to  a 
high- or infinite-dimensional space $\mathcal F$. 
For example, Section 3 considers 
multivariate density estimation
over the space 
of all densities on $\mathbb{R}^p$ with respect to Lebesgue measure, 
and Section 4 considers the high-dimensional space of 
multiway  contingency tables. 
In general, 
Bayesian inference for $f$ is based 
on a posterior distribution $\pi(f\in A|y)$
derived from a sampling model $\{ p(y|f) : f\in \mathcal F\}$
and a prior distribution $\pi$ defined on a $\sigma$-algebra $\mathcal A$
of $\mathcal F$. 
In many high-dimensional problems 
there are only a few classes of priors for which posterior inference is tractable. Typically, practitioners choose a member $\pi_0$ of such a class 
based on support considerations and the feasibility of posterior 
approximation, rather than 
how well it accurately represents any 
information we may have about 
specific features of $f$. 
In this section, we show how to construct a nonparametric 
prior $\pi_1$ that is informative about specific features of $f$, but 
has the same support as $\pi_0$ and 
is ``close'' to $\pi_0$ in terms of Kullback-Leibler 
divergence. We also show how MCMC  approximation methods for $\pi_0$ 
can be modified to obtain posterior inference under $\pi_1$. 
 
\subsection{Construction of a marginally specified prior}
Let $\theta=\theta(f)$ be a function of $f$, such as a
population mean of $p(y|f)$, variance, 
marginal probability vectors or some finite set of functionals, 
and let $\Theta$ be the range of  $\theta$. 
Any prior distribution $\pi_0$ on $\mathcal F$  induces a  
prior distribution $P_0$ on $\Theta$ defined by 
\begin{align}\label{E:NobodysPrior}
P_0(B) = E_{\pi_0}[ 1 (\theta \in B )  ] , 
\end{align}
where $B$ is any element of $\mathcal{B}$, a $\sigma$-algebra of  ${\Theta}$ 
making $\theta(f)$ a measurable function. 
If $\pi_0$ is chosen for computational convenience, 
the induced prior $P_0$ may not show substantial agreement with available prior information $P_1$ for the functional $\theta$. 
In some cases it may be possible to select a prior $\pi_0$ 
from a computationally feasible class 
to make the induced prior $P_0$ similar to $P_1$:
The results of 
\citet{lijoi_regazzini_2004} and \citet{yamato_1984} provide 
some guidance for Dirichlet process priors 
 if the functionals are means, but in general this will be difficult. 
Furthermore, depending on the structure of the nonparametric class, 
selecting $\pi_0$ in order to match $P_0$ to $P_1$ may result in 
$\pi_0$ being inappropriate for other aspects of $f$. 
As will be illustrated in an 
 example in Section 3, it can be difficult to make 
$\pi_0$ highly informative about $\theta(f)$ but weakly informative 
about other aspects of $f$. 

Suppose a nonparametric prior $\pi_0$ has been 
identified that is viewed as reasonable 
in some respects, such as being computationally feasible and having a
large support, 
but does not represent available prior information 
$P_1$ about $\theta$.
The information in $P_1$ can be accommodated by 
replacing $P_0$, the $\theta$-margin of $\pi_0$, with 
the desired margin $P_1$. 
Specifically, a marginally specified prior (MSP) $\pi_1$  for $f$ 
is obtained by combining the conditional 
distribution of $f$ given $\theta$ with our desired marginal distribution $P_1$ 
for $\theta$, so that 
\begin{eqnarray} \label{E:SomebodysPrior}
\pi_1(A) = \int \pi_0(A|\theta)  P_1(d\theta) .
\label{eq:msp}
\end{eqnarray}
Since $\theta=\theta(f)$, 
 $\pi_0(A|\theta)$ is a random function of 
$f$ and is not uniquely defined 
on null sets of $\pi_0$. To make (\ref{eq:msp}) meaningful, 
we restrict attention to informative 
prior distributions such that 
$P_1$ is dominated 
by $P_0$. 
Under this condition, 
the measure $\pi_1$ on $\mathcal A$ is well defined, 
and the $\theta$-marginal of $\pi_1$ can be computed as  
\begin{eqnarray*}
\pi_1( \{ f: \theta\in B\})  &=&
 \int  \pi_0(\{f: \theta\in B \}|\theta )  P_1(d\theta) \\  
&=& \int 1(\theta\in B) P_1(d\theta)  \\
&=& P_1(B) 
\end{eqnarray*}
for $B\in \mathcal B$
as was desired. 
Additionally, since $P_1 \ll P_0$, these measures have densities 
$p_1$ and $p_0$ with respect to a common dominating measure $\mu$  
(which can be taken equal to $P_0$, for example). 
This allows us to easily relate  the 
support  of $\pi_1$ to that of $\pi_0$:
\newtheorem{lemma}{Lemma}
\begin{lemma}\label{T:PiOneSupport}
Suppose  $P_1 \ll P_0$.  Then 
$\pi_1(A) =  E_{\pi_0}[  1(f\in A)  \tfrac{p_1(\theta)}{p_0(\theta)} ]$
for $A\in \mathcal A$. 
\end{lemma}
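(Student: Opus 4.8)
The plan is to verify the identity by disintegrating $\pi_0$ over the $\theta$-marginal and rewriting the conditional $\pi_1(A)=\int \pi_0(A\mid\theta)\,P_1(d\theta)$ against $P_0$ using the Radon–Nikodym derivative $dP_1/dP_0 = p_1/p_0$. Since $P_1\ll P_0$, this density exists; writing both $P_1$ and $P_0$ with densities $p_1,p_0$ against a common dominating measure $\mu$ on $(\Theta,\mathcal B)$ is a convenience, but the essential object is the ratio $p_1(\theta)/p_0(\theta)$, which is the $P_0$-a.e.\ defined Radon–Nikodym derivative of $P_1$ with respect to $P_0$.

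First I would start from the definition \eqref{eq:msp}, $\pi_1(A)=\int_\Theta \pi_0(A\mid\theta)\,P_1(d\theta)$, and substitute $P_1(d\theta)=\tfrac{p_1(\theta)}{p_0(\theta)}\,P_0(d\theta)$, which is valid because $P_1\ll P_0$ (here one sets the ratio to an arbitrary value, say $0$, on the $P_0$-null set where $p_0=0$, which is also $P_1$-null, so the integral is unaffected). This gives
\begin{align*}
\pi_1(A) = \int_\Theta \pi_0(A\mid\theta)\,\frac{p_1(\theta)}{p_0(\theta)}\,P_0(d\theta).
\end{align*}
Next I would recognize $P_0$ as the pushforward of $\pi_0$ under $\theta$, so by the disintegration theorem / tower property the right-hand side equals $E_{\pi_0}\!\big[\pi_0(A\mid\theta)\cdot \tfrac{p_1(\theta)}{p_0(\theta)}\big]$. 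Finally, since $\tfrac{p_1(\theta)}{p_0(\theta)}$ is a $\sigma(\theta)$-measurable function of $f$, I would pull it inside the conditional expectation and use $\pi_0(A\mid\theta) = E_{\pi_0}[1(f\in A)\mid\theta]$ together with the tower property once more to collapse the iterated expectation, yielding $\pi_1(A)=E_{\pi_0}\big[1(f\in A)\,\tfrac{p_1(\theta)}{p_0(\theta)}\big]$, as claimed.

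The routine measure-theoretic bookkeeping aside, the one point that genuinely needs care — and the main obstacle — is the status of the conditional distribution $\pi_0(A\mid\theta)$: because $\theta=\theta(f)$, this is only defined up to $\pi_0$-null sets, so I must ensure that a regular conditional probability exists (guaranteed when $\mathcal F$ is a nice space, e.g.\ Polish with its Borel $\sigma$-algebra, which covers the density-estimation and contingency-table settings of Sections 3 and 4) and that every manipulation above — the substitution of $P_1(d\theta)$, the two applications of the tower property — is insensitive to the choice of version, which it is precisely because $P_1\ll P_0$ forces all the relevant null sets to agree. Once regularity of the conditional is in hand, the rest is a direct change of measure.
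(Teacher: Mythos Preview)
Your proposal is correct and follows essentially the same route as the paper: start from $\pi_1(A)=\int \pi_0(A\mid\theta)\,P_1(d\theta)$, convert $P_1(d\theta)$ to $\tfrac{p_1(\theta)}{p_0(\theta)}\,P_0(d\theta)$ on the set where $p_0>0$, recognize $\pi_0(A\mid\theta)=E_{\pi_0}[1(f\in A)\mid\theta]$, and collapse via the tower property. The paper handles the division-by-zero issue by restricting the integral to $B_0=\{\theta:p_0(\theta)>0\}$ at the outset (noting $P_0(B_0)=P_1(B_0)=1$), which is exactly the null-set bookkeeping you describe; your additional remarks on the existence of a regular conditional probability are more explicit than the paper's treatment but do not change the argument.
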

\begin{proof} 
Let $B_0 = \{ \theta: p_0(\theta)> 0 \} $. Then 
 $1=P_0(B_0) =P_1(B_0)$ by the assumption and so 
\begin{eqnarray*} \label{E:PiOneProof}
\pi_1(A)&= &  
  \int_{B_0} \pi_0(A|\theta) p_1(\theta) \mu(d\theta) \\
   &=& \int_{B_0} \pi_0(A|\theta)  \tfrac{p_1(\theta) }{p_0(\theta) }  p_0(\theta) \mu(d\theta)  \\
 &=& \int E_{\pi_0} [1(f\in A) \tfrac{p_1(\theta) }{p_0(\theta) }   |\theta]  p_0(\theta) \mu(d\theta) \\
&=& E_{\pi_0} [ 1(f\in A)  \tfrac{p_1(\theta)}{p_0(\theta) }  ]. 
\end{eqnarray*}
\end{proof}

As a corollary, 
if the support of $p_1$ matches that of $p_0$, then 
the support of $\pi_1$ will be that of $\pi_0$: 
\newtheorem{cor}{Corollary}
\begin{cor}\label{T:PiOneSupport2}
Suppose $P_1 \ll P_0 \ll P_1$. Then $\pi_1 \ll \pi_0 \ll \pi_1$. 
\end{cor}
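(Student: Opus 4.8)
The plan is to deduce the two-sided absolute continuity $\pi_1 \ll \pi_0 \ll \pi_1$ directly from Lemma \ref{T:PiOneSupport}, without re-examining the construction of $\pi_1$. First I would prove $\pi_1 \ll \pi_0$. Suppose $A \in \mathcal{A}$ with $\pi_0(A) = 0$. Then $1(f \in A) = 0$ almost surely under $\pi_0$, and since $p_1(\theta)/p_0(\theta)$ is a well-defined finite-valued function on the set $B_0 = \{\theta : p_0(\theta) > 0\}$ (which has $P_0$-probability one), the integrand $1(f \in A)\,\tfrac{p_1(\theta)}{p_0(\theta)}$ vanishes $\pi_0$-almost surely; hence by Lemma \ref{T:PiOneSupport}, $\pi_1(A) = E_{\pi_0}[1(f \in A)\tfrac{p_1(\theta)}{p_0(\theta)}] = 0$. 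This gives $\pi_1 \ll \pi_0$.

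For the reverse direction $\pi_0 \ll \pi_1$, I would exploit the symmetry of the hypothesis $P_0 \ll P_1$ by reapplying Lemma \ref{T:PiOneSupport} with the roles of $P_0$ and $P_1$ interchanged. The key observation is that the conditional distribution $\pi_0(A \mid \theta)$ appearing in the definition \eqref{eq:msp} is, by construction, also a version of $\pi_1(A \mid \theta)$: since $\pi_1$ is built by combining exactly this conditional with the marginal $P_1$, and since $P_0 \ll P_1$, the same regular conditional distribution serves for both joint laws on the set where both marginal densities are positive. Consequently, the roles of $\pi_0$ and $\pi_1$ (and of $P_0$ and $P_1$) are genuinely interchangeable in Lemma \ref{T:PiOneSupport}, so applying it in the reversed direction yields $\pi_0(A) = E_{\pi_1}[1(f \in A)\,\tfrac{p_0(\theta)}{p_1(\theta)}]$, and the argument of the previous paragraph then gives $\pi_0 \ll \pi_1$.

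The main obstacle I anticipate is the measure-theoretic bookkeeping around regular conditional probabilities: one must confirm that a single version of $\pi_0(\cdot \mid \theta)$ can legitimately be used to represent conditioning under both $\pi_0$ and $\pi_1$, which requires that $P_1 \ll P_0$ (and, for the reverse, $P_0 \ll P_1$) so that $\pi_0$-null sets of $\theta$-values remain null under $P_1$ and the conditional is unambiguous on the relevant set. Once the equivalence $P_0 \equiv P_1$ of the $\theta$-marginals is in hand, this is routine. I would also note explicitly that $p_1/p_0 > 0$ holds $P_0$-almost surely under the stronger hypothesis $P_0 \ll P_1$ (equivalently $p_0 > 0 \Rightarrow p_1 > 0$ off a $\mu$-null set), which is precisely what upgrades the one-sided conclusion of Lemma \ref{T:PiOneSupport} to the two-sided statement; without it the Radon--Nikodym factor could vanish on a set of positive $\pi_0$-measure and destroy the reverse containment.
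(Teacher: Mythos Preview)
Your argument is correct, but the route you take for the reverse direction $\pi_0 \ll \pi_1$ differs from the paper's. You argue by symmetry: since the kernel $\pi_0(\cdot\mid\theta)$ also disintegrates $\pi_1$ (this is exactly how $\pi_1$ was built), and since $P_0 \ll P_1$, one may rerun Lemma~\ref{T:PiOneSupport} with the indices swapped to obtain $\pi_0(A)=E_{\pi_1}[1(f\in A)\,p_0(\theta)/p_1(\theta)]$, after which $\pi_0 \ll \pi_1$ follows just as in the forward direction. The paper instead never leaves the $E_{\pi_0}$ representation: starting from $\pi_1(A)=0$ and Lemma~\ref{T:PiOneSupport} as stated, it intersects with the sets $A_0=\{\theta\in B_0\}$ and $A_1=\{\theta\in B_1\}$, uses strict positivity of $p_1/p_0$ on $A_0\cap A_1$ to force $\pi_0(A\cap A_0\cap A_1)=0$, and then uses $P_0\ll P_1$ only at the end to kill the residual piece $\pi_0(A\cap A_1^c)$. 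Your symmetry argument is arguably cleaner conceptually and makes explicit why the two-sided hypothesis $P_0\equiv P_1$ is the natural one; the paper's approach has the small advantage that it does not require one to verify that a common regular conditional probability serves both joint laws, staying instead with a single expectation formula and elementary set manipulations.
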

\begin{proof}
It is clear from the definition of $\pi_1$  that $\pi_1\ll\pi_0$. 
To show $\pi_0\ll \pi_1$, 
let $A\in \mathcal A$ be a set such that  $\pi_1(A) = 0$. 
We will show that $P_0\ll P_1$  implies $\pi_0(A)=0$. 
Let $B_j = \{ \theta : p_j(\theta)>0 \}$ and 
$A_j=\{ f: \theta(f)\in B_j\}$ so that 
$\pi_j(A_j) = P_j(B_j) =1$ for $j\in \{0,1\}$. We have 
\begin{eqnarray}
0 = \pi_1(A) &=& \pi_1(A\cap A_1) \nonumber \\
  &=& E_{\pi_0}[ 1(A\cap A_1 ) \tfrac{p_1}{p_0} ] \nonumber \ \ \mbox{(by Lemma 1)} \\
  &=& E_{\pi_0}[ 1(A\cap  A_0  \cap A_1 ) \tfrac{p_1}{p_0} ] . 
\label{eq:supeq1}
\end{eqnarray}
Since $p_1/p_0 >0$ on $A_0\cap A_1$, (\ref{eq:supeq1}) implies that 
$\pi_0(A\cap A_0 \cap A_1)= 0$. Since $\pi_0(A_0)=1$, we have 
$\pi_0(A \cap A_1 ) = \pi_0(A) - \pi_0(A \cap A_1^c)  = 0$. 
Since $0=\pi_1(A_1^c) = P_1(B_1^c)$
and $P_0\ll P_1$, 
we must have 
  $0 =P_0( B_1^c) = \pi_0(A_1^c)$, and 
so $\pi_0(A) =0$. 
\end{proof}

We also note that  $\pi_1$ has a characterization as the 
prior distribution that is closest to $\pi_0$ in terms of Kullback-Leibler divergence, among  priors 
with $\theta$-marginal density equal to $p_1$. The divergence 
of any prior $\pi_1$ dominated by $\pi_0$ is given
by $E_{\pi_0}[\ln \tfrac{ \pi_1(f) }{\pi_0(f)} ]$, where the 
densities can be taken to be with respect to the $\pi_0$-measure, 
and here and in what follows $\pi$  denotes either a measure or a 
density, depending on context. 
If $\pi_1$ has $\theta$-marginal density $p_1$, the divergence can be expressed
as 
\begin{eqnarray*}
E_{\pi_0}[\ln  \tfrac{ \pi_1(f) }{\pi_0(f)} ] &=& 
   E_{\pi_0}[\ln  \tfrac{ \pi_1(f|\theta) }{\pi_0(f|\theta)} ] + 
  E_{\pi_0}[\ln  \tfrac{ p_1(\theta) }{p_0(\theta)} ], 
\end{eqnarray*}
which is minimized by setting $\pi_1(f|\theta) = \pi_0(f|\theta)$.


\subsection{Posterior approximation under MSPs}
For practical reasons the most commonly used priors
are those for which there exist  straightforward
Gibbs samplers or Metropolis-Hastings algorithms for posterior approximation.
In many cases, simple modifications to these 
algorithms can be made to allow for the incorporation of informative priors over functionals of interest.
To illustrate, suppose that under prior $\pi_0$ we have a Gibbs sampler for a high dimensional 
parameter $f$.
Recall that the Gibbs sampler can be viewed as a Metropolis-Hastings 
algorithm for which the proposals are accepted with probability one. 
From this perspective, a Gibbs sampler for  approximating the 
posterior density $\pi_0(f|y)$ is constructed from proposal distributions
with densities 
 $J(f^*| f, y)$ that are proportional to the posterior density, so that
\begin{equation}
 \frac{J(f^*|f,y) }{J(f|f^*,y) } = \frac{ \pi_0(f^*|y) }{\pi_0(f|y) }. 
\label{eq:gibbsprop}
\end{equation}
For example, decomposing $f$ as  $\{f_1,\ldots, f_K\}$, 
the full conditional distribution $\pi_0(f_k|f_{-k},y)$ is one such proposal distribution. 

Posterior approximation of $\pi_1(f|y)$ can proceed by using the 
proposal distributions of the Gibbs sampler for $\pi_0(f|y)$, but adjusting 
the acceptance probability. Specifically, the algorithm for 
approximating $\pi_1(f|y)$ 
proceeds by iteratively simulating proposals $f^*$ from 
distributions of the form 
  $J(f^*|f,y)$ which satisfy (\ref{eq:gibbsprop}), 
and accepting each proposal $f^*$ with probability 
$1\wedge r_{\rm MH}$, where
\begin{eqnarray*}
r_{\rm MH} &=& \frac{\pi_1(f^*|y)}{\pi_1(f|y)} \times \frac{J(f|f^*,y) }{J(f^*|f,y) }  \\
  &=& \frac{\pi_1(f^*|y)}{\pi_1(f|y)} \times 
 \frac{ \pi_0(f|y) }{\pi_0(f^*|y) }  \\
  &=& \frac{ p(y|f^*) \pi_1(f^*) }{p(y|f) \pi_1(f) }  \times 
      \frac{ p(y|f) \pi_0(f) }{p(y|f^*) \pi_0(f^*) }  
 = \frac{  \pi_1(f^*)/\pi_0(f^*)  }{\pi_1(f)/\pi_0(f)  }. 
\end{eqnarray*}
If $\pi_1$ is a marginally specified prior based on $\pi_0$ 
and a marginal density $p_1$ for  $\theta=\theta(f)$, 
we can write 
$\pi_1(f) = \pi_1(\theta) \pi(f|\theta)   = p_1(\theta) \pi_0(f|\theta)$, 
so that 
the acceptance ratio simplifies to
\[   \frac{ p_1(\theta^*)/p_0(\theta^*) }{ p_1(\theta)/p_0(\theta)  }. \]

Similarly, an approximation algorithm for $\pi_1(f|y)$ 
can be constructed from a Metropolis-Hastings algorithm for 
$\pi_0(f|y)$ via the same adjustment. Suppose we have 
a proposal distribution $J(f^* | f,y)$ such that 
the acceptance ratio  $r^0_{\rm MH}$ for $\pi_0$ is computable:
\[
r_{\rm MH}^0 = \frac{ \pi_0(f^*|y) }{\pi_0(f|y)} \frac{ J(f|f^*,y)}{J(f^*|f,y)}
\]
The Metropolis-Hastings algorithm for approximating 
$\pi_1(f|y)$ using $J(f^*|f,y)$ has acceptance ratio
\begin{eqnarray*}
r_{\rm MH}  &=&  \frac{\pi_1(f^*|y) }{\pi_1(f|y)} \frac{ J(f|f^*,y)}{J(f^*|f,y)} \\ 
&=& \frac{\pi_1(f^*|y) }{\pi_1(f|y)} \frac{ \pi_0(f|y) }{\pi_0(f^*|y)} r_{\rm MH}^0  \\
&=& \frac{ p_1(\theta^*)/p_0(\theta^*)} { p_1(\theta)/p_0(\theta)}r_{\rm MH}^0. 
\end{eqnarray*}

These results show that
an MCMC  approximation to $\pi_1(f|y)$ 
can be constructed from an MCMC algorithm for   
$\pi_0(f|y)$ as long as the ratio $p_1(\theta)/p_0(\theta)$ can be computed. 
The value of $p_1(\theta)$ for each $\theta\in \Theta$ is presumably
available as $p_1$ is our desired prior distribution for $\theta$.
In contrast, obtaining a formula for
$p_0(\theta)$ may be difficult.
In situations where  the dimension of $\theta$ is not too large, one
simple solution is to obtain a Monte Carlo 
estimate of $p_0$ based on samples of $f$ from $\pi_0$.
Specifically, we can obtain an i.i.d.\ sample
$\{ \theta_i = \theta(f_i) , i=1,\ldots, S\}$ from  
$f_1,\ldots, f_S\sim$ i.i.d.\ $\pi_0$, and then approximate
$p_0$ with 
a kernel density estimate or flexible parametric family. 
Note that this can be done before the Markov chain is run, so that
the same estimate of $p_0$ is used for each iteration of the
algorithm. 

In situations where obtaining a reliable estimate of $p_0$  is not
feasible, it is still possible 
to  induce  a prior $p_1$ that is approximately equal 
to a  target prior $\tilde p_1$, 
as long as $p_0$ is relatively flat compared to $\tilde p_1$. 
This can be done by replacing $p_0$, the $\theta$-margin of 
$\pi_0$, with
$ p_1(\theta)  \propto p_0(\theta) \tilde p_1(\theta)  
     =  K p_0(\theta) \tilde p_1(\theta)$. 
This defines a valid probability density as long 
as $p_0\tilde p_1$ is integrable, 
which is the case, for example, if either density is bounded. 
Heuristically,
if the prior $\pi_0$ on $\mathcal F$  is chosen to be very diffuse,
then the induced prior $p_0$ is likely to be relatively
flat on $\Theta$ compared to the target informative prior
$\tilde p_1$, and we should have $p_1 \approx \tilde p_1$.
In terms of the MCMC approximation to the resulting marginally specified prior $\pi_1$, 
the adjustment to the acceptance ratio is then 
\begin{eqnarray*}
 \frac{ p_1(\theta^*)/p_0(\theta^*) }{ p_1(\theta)/p_0(\theta)  } &=&
  \frac{ \tilde p_1(\theta^*) }{\tilde p_1(\theta) },
\end{eqnarray*}
which is presumably computable as $\tilde p_1$ is the desired 
prior density.


\section{Density estimation with marginally adjusted DPMM}
\label{S:PostSampMeth}
Perhaps the most commonly used 
NP Bayes procedure is the Dirichlet process mixture model, 
or DPMM \citep{lo_1984,escowest_1995, maceachern_muller_1998}. 
The DPMM consists of  a mixture model along with a Dirichlet process prior 
for the mixing distribution. The population density to be estimated 
and the prior 
can be  expressed as 
\begin{eqnarray*}
p(y|Q) &= &\int p(y|\psi) Q(d\psi)  \\
Q &\sim& {\rm DP}(\alpha Q_0) , 
\end{eqnarray*}
where 
$\alpha$ and $Q_0$ are hyperparameters of the Dirichlet process prior, 
with $Q_0$ typically chosen to be conjugate to the 
parametric family of 
mixture component
densities,
$\{ p(y|\psi) : \psi\in \Psi  \}$, to facilitate posterior 
calculations. 
In this section we show how to
obtain posterior approximations  under
a marginally specified prior $\pi_1$ based on a
DPMM. 
The approach is illustrated with the 
specific case 
of multivariate density estimation, for which 
we take 
the parametric family to 
be the class of multivariate normal densities.
In an example analysis of the well-known 
bivariate dataset on eruption times of the Old Faithful geyser, 
we construct a prior distribution $\pi_1$ based on the multivariate normal 
DPMM with  a marginally specified informative prior on the marginal means and variances. 
Inference under 
$\pi_1$ is compared to 
inference under two standard DPMMs,
one where the  hyperparameters
are chosen to be informative about $\theta$ and another 
where the hyperparameters are noninformative. 

\subsection{Posterior approximation}
Given a sample $\v y_1,\ldots, \v y_n \sim $ i.i.d.\ $p(\v y|Q)$, 
posterior approximation for conjugate DPMMs is often  made with a 
Gibbs sampler that iteratively simulates values of a 
 function that associates data indices to 
the atoms of $Q$. In a DPMM, since $Q$ is discrete 
with probability one, a given  mixture component 
(atom of $Q$) may be associated with multiple observations. 
Let $g:\{ 1,\ldots, n\} \rightarrow \{1,\ldots, n\}$ be the 
unknown 
mixture component membership function, so that
$g_i=g_j $ means that $\v y_i$ and $\v y_j$ came from the 
same mixture component. Note that $g$ can always be expressed as 
a function that maps  $\{1,\ldots,n\}$ onto $\{1,\ldots, K\}$, where 
$K\leq n$. 
Inference for conjugate DPMMs often
proceeds by iteratively sampling each $g_i$ from 
its full conditional distribution  $p(g_i| \v y_1,\ldots, \v y_n , g_{-i} )$ \citep{bush_maceachern_1996}.
Additional features of $Q$ and $p(\v y|Q)$ can be simulated given 
$g_1,\ldots, g_n$ and the data. 


This standard algorithm for DPMMs can 
be modified to accommodate 
a marginally specified 
prior distribution on a parameter $\theta=\theta(Q)$. 
Let $f = \{ g,\theta \}$ and let $\pi_0$ be the prior density on $f$ 
induced by the Dirichlet process on $Q$. 
Our marginally specified prior is given by 
 $\pi_1(f) = \pi_0(f) p_1(\theta)/p_0(\theta)$, 
where $p_0$ is the density for  $\theta$ induced by $\pi_0$
and $p_1$ is the informative prior density.
An MCMC approximation to $\pi_1(f|y_1,\ldots,y_n)$
can be obtained via 
 the procedure outlined in Section 2.2. 
Given a current state of the Markov chain $f=\{\theta ,
 g_k,g_1,\ldots,g_{k-1},g_{k+1},\ldots, g_n\} = \{\theta,g_k,g_{-k}\} $, 
the next state is determined as follows:
\begin{enumerate} 
\item 
Generate a 
proposal 
$f^*=\{\theta^*,g_k^*,g_{-k}\}$  
from 
$\pi_0(\theta,g_k| g_{-k},y) = \pi_0(g_{k}|g_{-k},y) \pi_0(\theta| g,y )$ by
\begin{enumerate}
\item generating $g_k^* \sim \pi_0(g_k | g_{-k},y) $; 
\item generating $\theta^* \sim \pi_0(\theta| g_k^*,g_{-k},y)$.
\end{enumerate}
\item
Set the value of the next state  of the chain to  $f^*$ with probability 
 $1\wedge [p_1(\theta^*)/p_0(\theta^*) ] /
   [p_1(\theta)/p_0(\theta)] $, otherwise let the next state equal 
the current state. 
\end{enumerate}
This procedure is iterated over values of $k\in \{1,\ldots, n\}$, 
possibly in random order, 
and repeated until the desired number of simulations of $f$ is obtained. 
Note that steps 1.(a) and 1.(b) compose a standard Gibbs sampler 
for the DPMM in which posterior inference for $\theta$ is provided, 
although typically we would only simulate $\theta$ once
per complete update of $g_1,\ldots, g_n$. 
The algorithm for the  marginally specified prior 
 $\pi_1$ requires that 
$\theta$ be simulated with each proposed value of $g_k$ 
so that the acceptance probability in 
step 2 can be calculated. 

Implementing the steps of this MCMC algorithm involves two non-trivial computations: simulation of $\theta$ from $\pi_0(\theta|g,y)$, and calculation of 
$p_0(\theta)$ in order to obtain the acceptance probability. 
General methods for the latter were discussed in Section 2.2. 
For the former, 
we suggest using a 
Monte Carlo approximation to $Q$ 
based upon a representation of Dirichlet processes due to \citet{pitman_1996}. 
Let $K$ be the number of unique values of $g_1,\ldots, g_n$ and 
let $n_k$ be the number of observations $i$ for which $g_i=k$.
If $Q_0$ is conjugate, then the parameter values 
$\psi_{(1)},\ldots, \psi_{(K)}$  corresponding to the 
mixture components can generally be easily simulated. 
Corollary 20 of \citet{pitman_1996} gives 
the conditional distribution of $Q$ given 
$\psi_{(1)},\ldots, \psi_{(K)}$  and counts $n_1,\ldots, n_K$ as 
\[ \{ Q(H) |\psi_{(1)},\ldots, \psi_{(K)}, n_1,\ldots, n_K  \} \stackrel{d}{=} \gamma \sum_{k=1}^K 1(\psi_{(k)}\in H)  w_k +
(1-\gamma) \tilde Q(H), \]
where $\gamma \sim {\rm Beta}(n,\alpha)$,  $w \sim {\rm Dirichlet}(n_1,\ldots, n_K)$ and  $\tilde Q\sim {\rm DP}(\alpha Q_0)$.
A Monte Carlo approximation to $Q$, 
and therefore any functional of $Q$,  can be obtained via simulation of a
large number $S$  of $\psi$-values from $Q$.
To do this, we first simulate $\gamma$ and $w_1,\ldots, w_K$
from their beta and Dirichlet full conditional distributions.
From these values we sample cluster memberships for a sample of size
$S$ from $Q$ using a multinomial$(S, \{ \gamma w_1,\ldots, \gamma w_K,1-\gamma \})$ distribution. Note that the count $s$ for the $K+1$st category
represents the number of $\psi$-values that must be simulated from
$\tilde Q$.
To obtain
the sample from $\tilde Q$ we run a Chinese restaurant process
of length $s$, and then generate the unique $\psi$-values from
$Q_0$ for each partition. This can generally be done quickly
for two reasons:
First,
the expected number of samples needed from $\tilde Q$ is only $S \alpha/(\alpha+n)$.
For example, with $S=1000$, $n=30$ and $\alpha=1$, we expect to only need about $s=32$
simulations from $\tilde Q$.
Second, the number of unique values in a sample of size $s$ from $\tilde Q$ is only of order $\log s$, which will generally be manageably small.

\subsection{Example: Old Faithful eruption times}
The Old Faithful dataset consists of 272 bivariate observations
of eruption times and waiting times between eruptions, both measured in 
minutes. 
To illustrate and evaluate 
the MSP  methodology
we construct two subsets of these data: a random sample  
of size $n_0=30$ from which we obtain prior information and 
a second, non-overlapping random sample of size $n=30$ representing 
our observed data. 
The random samples were obtained by setting
the random seed in {\sf R} (version 2.14.0) to 1, sampling the prior
dataset, and then sampling
the observed dataset from the remaining observations.
For the purpose of this example, we view 
the full dataset of 272 observations as  the 
``true population.'' 
A scatterplot of the observed data and  marginal density estimates 
are shown graphically in Figure \ref{fig:data}. 
The observed dataset consisting of $n=30$ observations 
clearly captures the bimodality of the population. However, the 
marginal plots indicate that the sample has overrepresented 
one of the modes. 

\begin{figure}[ht]
\centerline{\includegraphics[width=6.75in]{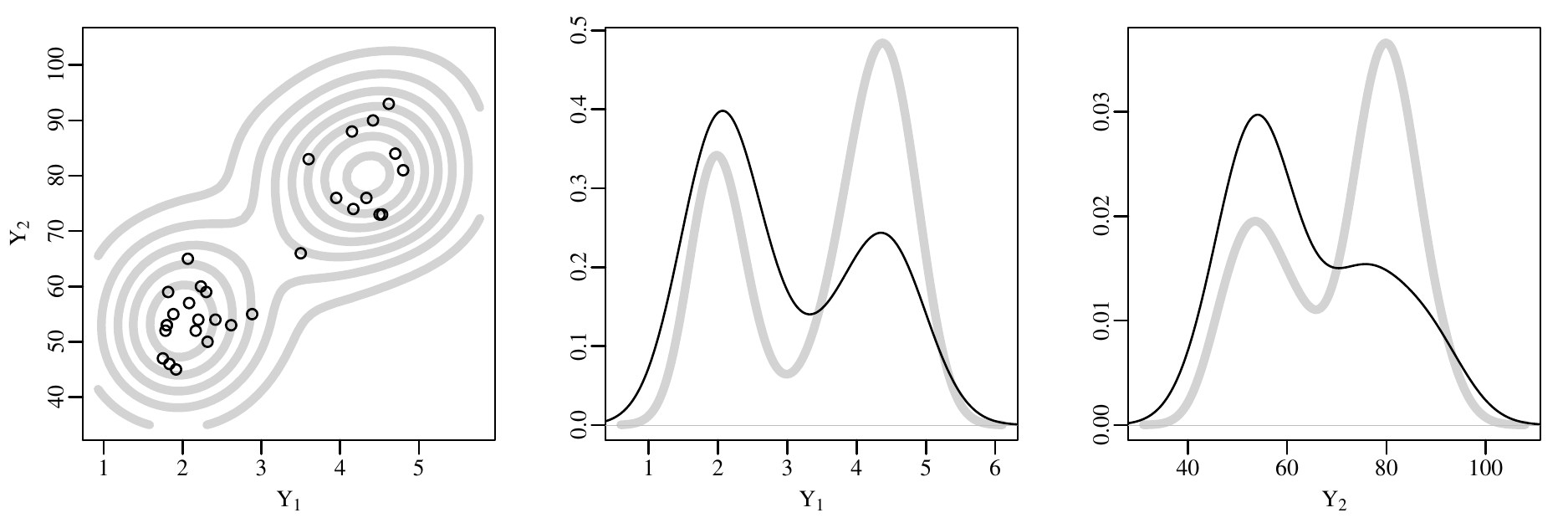}}
\caption{Population and sample: The left-most panel 
shows the contours of the population density and a scatterplot of 
the $n=30$  randomly sampled observations. 
The center and right panels show marginal densities for the 
population (light gray) and sample (black). }
\label{fig:data}
\end{figure}

Suppose our knowledge of the prior sample is limited to 
the bivariate marginal  sample means $m_0 \in \mathbb R^2$ and 
sample variances 
 $v_0 \in \mathbb (R^+)^2 $. 
In such a situation it would be desirable to construct 
a prior density $p_1$ over the unknown population marginal means $m$
and variances $v$ based on the values of  $m_0$, $v_0$ and $n_0$, and 
combine this information with the 
information in our fully observed sample to improve our inference about the population. 
Incorporating this information with conjugate priors 
would be more or less straightforward if our sampling model 
were bivariate normal, but it is 
difficult in the context of a DPMM. 
Proposition 5 of 
\citet{yamato_1984} indicates that if the base measure 
$Q_0$  in the Dirichlet process prior is multivariate 
normal$(\mu_0,\Sigma_0)$, 
then 
the induced prior distribution on the mean $\int x Q(dx)$ is 
approximately multivariate normal$(\mu_0,\Sigma_0/[\alpha+1]) $. 
This result is not directly applicable to the multivariate 
normal DPMM  for two reasons, one being that $Q$ represent the 
mixing distribution and not the population distribution, 
and the other being that
in the conjugate multivariate normal DPMM the 
parameter $\psi$ in the mixture component consists not just of 
a mean $\mu$ but also a covariance matrix 
$\Sigma$. 
Specifically, in the conjugate $p$-variate normal DPMM,  the density $q_0$ of the base  measure $Q_0$ for  $\psi=(\mu,\Sig)$ is given by 
\begin{equation}
q_0(\v\mu, \Sig)  = {\rm normal}_p(\v\mu : \v\mu_0,\Sig/\kappa_0) \times 
   \mbox{inverse-Wishart} (\Sig: \m S_0^{-1}, \nu_0) 
\label{eq:dpmbm}
\end{equation}
where the functions on the right-hand side  are the multivariate 
normal and inverse-Wishart densities respectively, 
the latter being parameterized  so that 
$\Exp{\Sig} = \m S_0/(\nu_0-p-1)$. 
With some effort (details available from the second author) it is possible to 
obtain values of 
the hyperparameters 
$(\mu_0,\kappa_0,S_0,\nu_0)$ and $\alpha$ so that the 
induced prior  distributions on the 
population mean $m(Q) =\int \int y  p(y|\psi) Q(d\psi) dy $ 
and variance $V(Q) = \int \int y y^T p(y|\psi) Q(d\psi) dy - m(Q)m(Q)^T$ 
have the following properties: 
\begin{equation}
\Exp{ \v m (Q) }  \approx  \v m_0   \ , \ \ \ 
\Var{ \v m (Q) }  \approx   \m V_0/n_0   \ , \ \ 
\Exp{ \m V(Q) } \approx  \m V_0 .  
\label{eq:infmom}
\end{equation}
Unfortunately, it seems difficult to specify the prior  on
$\m V(Q)$  separately from that of $\m m(Q)$ within the context of the DPMM. 

We construct three different 
nonparametric prior distributions for 
a comparative 
analysis of the Old Faithful data:
\begin{itemize}
\item Informative DPMM $\pi_0^I$: 
The base measure  density
$q_0$ is as in (\ref{eq:dpmbm}) with 
$(\mu_0=m_0,\kappa_0=n_0/(\alpha+1) , \nu_0=n_0, S_0 = \nu_0 V_0)$, 
where the diagonal of $V_0$ is $v_0$, the marginal variances from the 
prior sample, and the correlation is equal to the sample correlation from the 
observed data. 
This results in a prior on $Q$ essentially satisfying
(\ref{eq:infmom}), 
thereby utilizing the prior information.

\item Noninformative DPMM $\pi_0^N$:   The base measure density 
$q_0$ is as in (\ref{eq:dpmbm}) with
$(\mu_0=\bar y,\kappa_0=1/10 , \nu_0=p+2=4, S_0 = S_y )$, 
where $\bar y$ is the sample mean from the $n=30$ 
values in the observed sample, and $S_y$ is the sample 
covariance matrix. This prior does not use information from the prior sample, and is designed to promote relative  diffuseness of the induced prior 
on the marginal population  means and variances.
Note that using sample moments for the hyperparameters
weakly centers the prior around the observed data. 
We can view this as a type of ``unit information'' prior 
\citep{kass_wasserman_1995}. 

\item Marginally specified prior $\pi_1$: Letting $\theta=(m_1,m_2,v_1,v_2)$
be the unknown population means and marginal variances, 
we construct a marginally specified prior by replacing the 
$\theta$-margin  of $\pi^N_0$ with $p_1(\theta)$, a 
product of two  univariate normal and two inverse-gamma densities, 
chosen to match
the prior on $\theta$ induced by $\pi_0^I$ as closely as possible.
\end{itemize}
Thus $\pi_0^I$ and $\pi_1$ have roughly the same $\theta$-margin, 
but otherwise $\pi_1$ matches the more diffuse prior $\pi_0^N$. 
Of course, we could have given $\pi_1$ any $\theta$-margin we wished, 
but matching the margins of $\pi_0^I$  and $\pi_1$ facilitates comparison. 
The hyperparameter $\alpha$ was set to 1 for all of the above prior 
distributions.

In order to evaluate the Metropolis-Hastings ratios when approximating
the posterior  distribution under $\pi_1$,  we found that a
skewed multivariate $t$-distribution 
provided a very accurate 
approximation to  the joint distribution 
of the marginal means and log variances induced by $\pi^N_0$. 
Via a change of variables, this provides an accurate approximation 
to $p_0(\theta)$, with which the acceptance probability 
is computed for approximation of $\pi_1(f|y)$. 

Markov chains of length 25,000 were run under each prior,
with parameter values being saved every 10th iteration, resulting 
in 2500  simulated values of each parameter with which to make posterior 
approximations. 
The chains showed no evidence of non-stationarity and 
mixed  well under each prior: 
Based on the dependent MCMC sequences of length 2500, 
the equivalent number of 
independent observations of $\theta$ 
(i.e., the effective sample sizes)
were estimated as  above 2000 
for each element of $\theta$ and under each prior. 

Posterior predictive distributions under the 
three priors are shown in Figure \ref{fig:yjoint}. 
The informative DPMM provides a  poor 
representation of  the population distribution, given in light gray contours. 
This is primarily a result of having to set the  $\kappa_0$ 
hyperparameter to be 
moderately large $(\kappa_0=30)$ in order to obtain 
the desired  informative 
prior variance for the population mean $\v m=(m_1,m_2)$. 
Unfortunately, 
setting this parameter so high means that values of $\v \mu$ in the mixture 
model are tightly concentrated around $\v m_0$, and so the multimodality 
is not captured.  
In contrast, the posteriors under the 
 noninformative DPMM $\pi_0^I$ and the MSP $\pi_1$  are able to capture the multimodality of the population. 

\begin{figure}[ht]
\centerline{\includegraphics[width=6.65in]{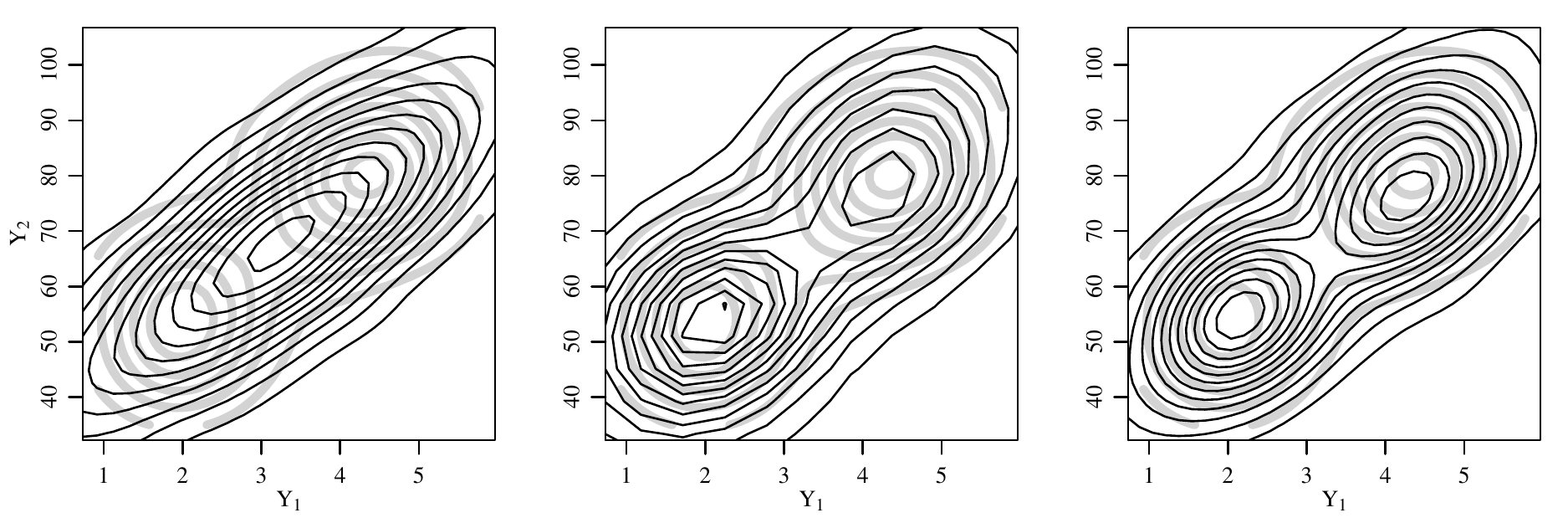}}
\caption{ Contour plots of the posterior predictive density in black
and the  population  density in gray, under 
$\pi_0^I$, $\pi_0^N$ and $\pi_1$ from left to right. 
} 
\label{fig:yjoint}
\end{figure}

Figure \ref{fig:ymarg}
gives marginal  density estimates
under the different priors.
The figure suggests that the posterior under $\pi_1$ is  better at representing the underlying population than the posteriors under the other priors. 
Recall that the observed sample contains an unrepresentative number of 
low-valued observations. 
The posterior under the non-informative prior  $\pi_0^N$
uses 
only the observed data and thus is equally unrepresentative  of the population.
In contrast, $\pi_1$ is able to use some information from the 
prior sample, and is therefore more representative of the population. 

Finally, the marginal posterior distributions of the 
marginal parameters $\v m$ and $\log \v v$ 
are given in Figure \ref{fig:psimarg}. The priors are given 
in gray and the resulting posterior distributions are given in black. 
The population values based upon the full set of 272 observations  are given by gray vertical lines. 
Across all parameters, $\pi_1$  gives posteriors that are most concentrated
around the population means. Note that the difference between the priors
and the posteriors under $\pi_0^I$ is not that large. 
We conjecture that this is primarily a result of the fact that under 
$\pi_0^I$, most observations are estimated as coming from the same 
mixture component, thereby overestimating the entropy, 
when in fact the data are bimodal. 
In contrast, $\pi_1$ is able to recognize the bimodality and obtain 
improved estimates of the marginal densities.  

In this example, we have shown that efforts to make the canonical DPMM
informative in terms of marginal means and variances can lead to
poor density estimates, 
 whereas a noninformative DPMM can
lead to  suboptimal estimates of functionals due to its inability
to incorporate prior information. In contrast, a
marginally specified
prior is able to both incorporate prior information and provide
accurate density estimation.


\begin{figure}[ht]
\centerline{\includegraphics[width=5.5in]{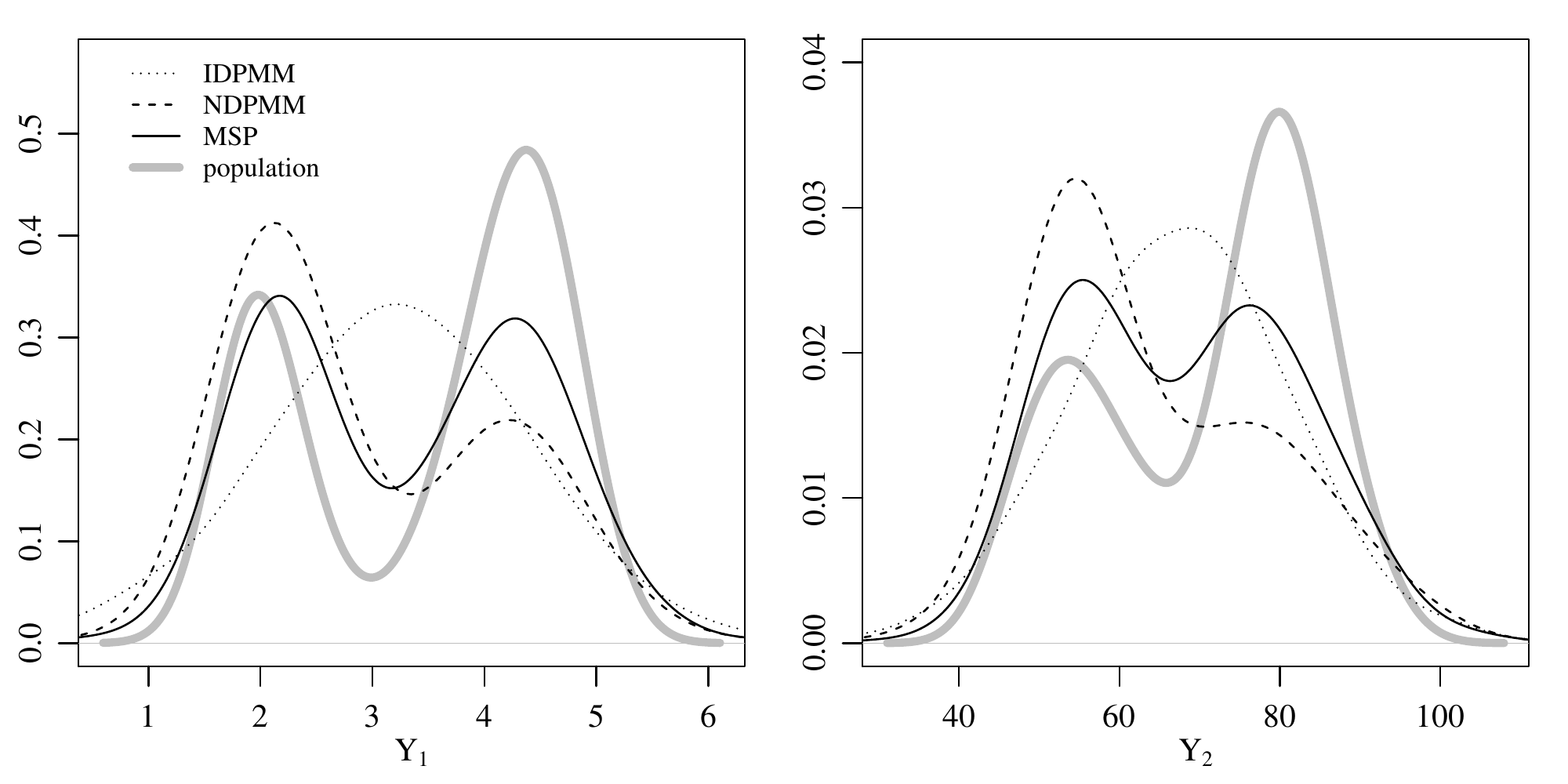}}
\caption{ Marginal population densities and 
estimates 
from the three priors: informative DPMM (IDPMM), noninformative DPMM (NDPMM) and marginally specified prior (MSP).   }
\label{fig:ymarg}
\end{figure}

\begin{figure}[ht]
\centerline{\includegraphics[width=6.65in]{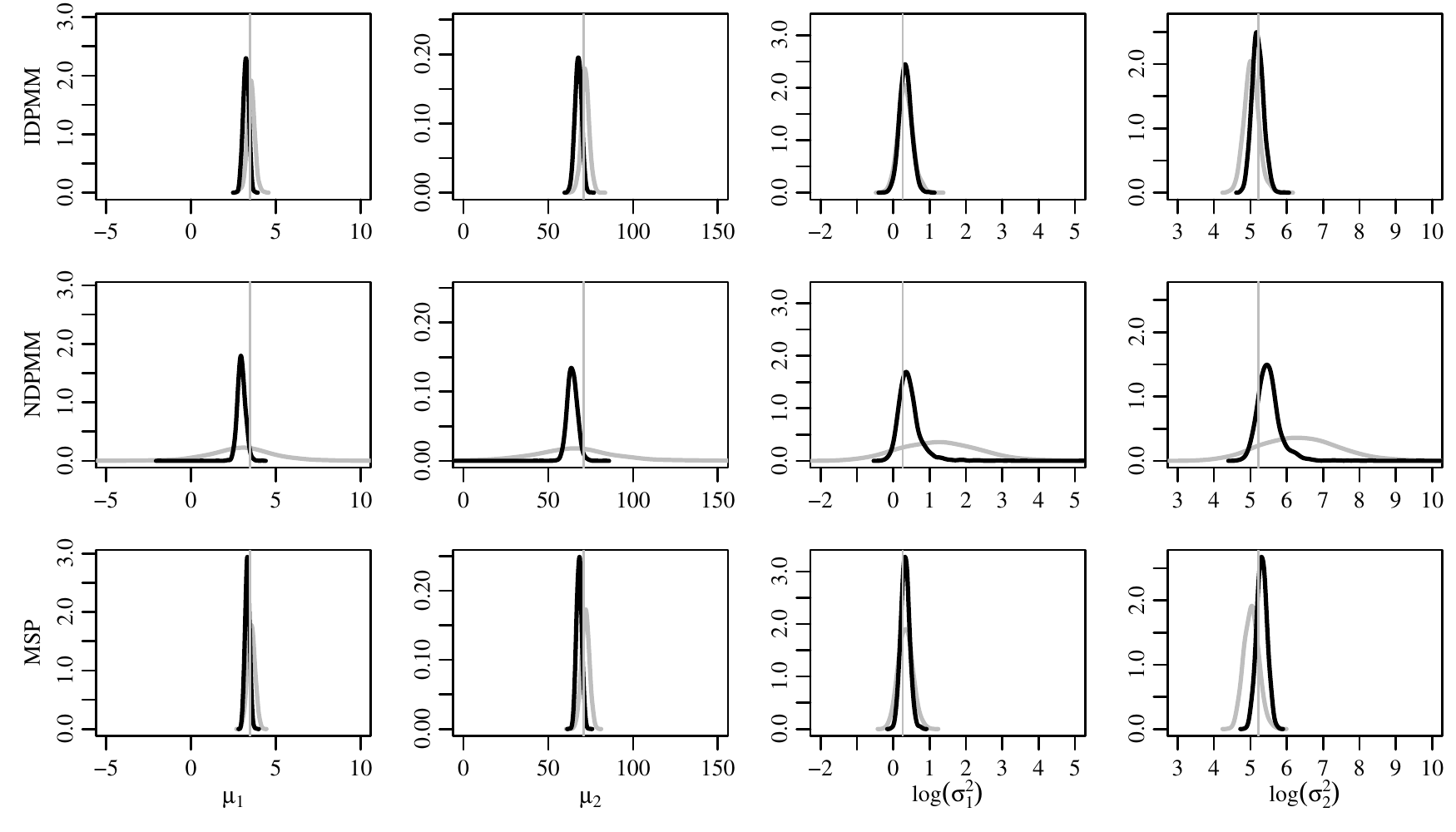}}
\caption{Priors (gray) and posteriors (black) for the marginal 
means  and log variances.}
\label{fig:psimarg}
\end{figure}

\newcommand{\margpri}{\v \theta}
\newcommand{\margpris}{\theta}
\newcommand{\mprinfo}{\v m}
\newcommand{\mprinfos}{m}

\newcommand{\PriorOne}{$\pi_0^I$}
\newcommand{\PriorTwo}{$\pi_0^N$}
\newcommand{\PriorThree}{$\pi_1$}
\newcommand{\PostThree}{$\pi_1(\v f|\v y)$}
\newcommand{\PriorFour}{$\pi_0$}
\newcommand{\PostFour}{$\pi_0(\v f | \v y)$}

\section{Marginally specified priors for contingency table data}
\label{S:CTM}
\newcommand{\PUMSPEE}{p}
\newcommand{\PUMSPEEMONE}{p-1}
\newcommand{\EQPUMSSP}{}
Even when multivariate categorical data include only  moderate numbers of variables and categories, 
large or full models that allow for complex or  
arbitrary multivariate dependence 
can involve a very large number of parameters. 
For example, a full model for the $2\times 3\times 2\times 8\times 12$-way contingency table data we consider later in this section 
requires a 1151-dimensional parameter.
One Bayesian 
approach to the analysis of such data is via model selection among 
reduced log-linear models \citep{dawid_lauritzen_1993,dobra_massam_2010}. 
However, model selection can be difficult even for moderate numbers of 
variables  and categories, 
due to the large number of models with low posterior probability and the resulting difficultly in completely exploring the model space.
An alternative NP Bayes approach 
is provided by \citet{dunson_xing_2009}, who 
developed a prior based on a Dirichlet process mixture of product multinomial distributions. Such a prior
has full support on the parameter space but concentrates prior 
mass near simple submodels. 
However,  this approach lacks a  straightforward method for the incorporation of the type of marginal prior information that is frequently available for categorical data.

In this section we consider an alternative 
NP Bayes approach based on a marginal adjustment to a 
standard Dirichlet prior distribution. 
This approach is relatively straightforward computationally, 
and also allows for the incorporation of prior information 
on specific functionals of the unknown population distribution, 
such as the  univariate marginals.

\subsection{The canonical Dirichlet prior}

Multivariate categorical data consist of 
observations $y_i= (y_{i1}, \ldots, y_{ip})$, for which $y_{ij} \in \{1,2,\ldots,d_j\}$ for $j=1,\ldots,p$.
A $p-$way contingency table is a common representation for such data, in which each cell of the table indicates the count of observations $y_i$ such that $y_{i1}=c_1, \, \ldots, y_{ip}=c_p$ for a specific  
response vector $c=(c_1,\ldots,c_p)$.
The sampling model for a contingency table can be expressed 
as a multinomial distribution, 
where
for each cell $c\in \mathcal C = \{ c: 1\leq c_j\leq d_j, j=1,\ldots, p\}$
we define $f_{c} \equiv \text{Pr}(y_{i1}=c_1,\ldots,y_{ip}=c_p)$. 
The full model of all distributions for  the data
can therefore be indexed by the parameter 
$f=\{ f_{c} :  c\in \mathcal C\}$, which lies in the 
$(\prod d_j -1)$-dimensional simplex. 
Given $n$ i.i.d.\ observations, the likelihood is 
 $L(f|y_1,\ldots, y_n) = \prod_{c_1=1}^{d_1} \times \cdots \times \prod_{c_p=1}^{d_p} f^{\sum 1(y_i=c)   }_{c}$, 
for which a 
standard conjugate prior is the 
Dirichlet distribution with hyperparameter $\v \alpha \in (\mathbb R^+)^ 
   {\prod d_j}$. 
This is a nonparametric prior in the sense that it gives full support on the space of possible values of $\v f$.  

The Dirichlet prior is an appealing choice computationally because of its conjugacy, but this convenience can result in undesirable side effects.  
In particular, choosing what appears to be an uninformative Dirichlet prior for $\v f$ can induce substantial informativeness about the 
marginals $\{ \margpri_1,\ldots, \margpri_p\}$, where $\margpri_j = \{ \theta_{j1},\ldots, \theta_{jd_j}\} =  \{\Pr(y_{ij}=1|f),  \ldots, \Pr(y_{ij}=d_j|f)\}$.
For example, setting $\alpha_c=1$  for each cell $c\in \mathcal C$ results in 
a uniform prior distribution for $f$, often used as a default prior distribution
in the absence of prior information. 
However, the induced prior on the marginals $\margpri_1,\ldots, \margpri_p$ 
is highly informative:
The marginalization properties of the Dirichlet distribution result in $\margpri_j~\sim~$Dir$( \prod_{k\neq j} d_k, \ldots, \prod_{k\neq j}d_k)$, 
which is generally highly  concentrated around the uniform distribution 
on $\{1,\ldots, d_j\}$. 
On the other hand, it is reasonably straightforward to choose values 
of $\alpha_c$ to induce particular marginal Dirichlet priors on 
the $\theta_j$'s, although each marginal prior must have the same 
concentration. However, 
this approach to constructing an informative prior for the margins 
necessarily induces a prior over the remaining aspects of $f$, such as 
the dependence structure, that could be undesirably informative. 

\subsection{A marginally specified prior}
\label{S:CTSAMP}
To overcome these undesirable features of the Dirichlet prior,
we construct a nonparametric prior on $f$ based upon a Dirichlet 
distribution with a low total concentration, but with the induced marginal 
priors for $\theta_1,\ldots, \theta_p$ replaced with 
informative priors to reflect known information. 
Specifically, our prior for $f$ takes the form 
\begin{eqnarray*} \pi_1(f,\theta)  &=& 
    \pi_0(f|\theta) \times  p_1(\theta) \\
 &=& \pi_0(f|\theta) \times \prod_{j=1}^p p_{1j}(\theta_j), 
\end{eqnarray*}
where $\pi_0(f)$ is a Dirichlet$(\alpha_0,\ldots, \alpha_0)$ distribution on the $(\prod d_j-1)$-dimensional simplex and $p_{1j}$ is an informative  Dirichlet distribution on $(d_j-1)$-dimensional simplex. Recall from Section 2 that the marginally specified prior $\pi_1$  
is the closest distribution in Kullback-Leibler divergence to $\pi_0$ that 
has the desired priors on $\theta_1,\ldots, \theta_p$. 
Also note that the methodology does not require that these induced priors 
be Dirichlet, although making them so will facilitate comparison to 
an informative Dirichlet prior distribution on $f$ in the 
example data analysis that follows. 

Estimation of $f$ via the posterior distribution $\pi_1(f|y)$ can 
proceed via an 
MCMC algorithm. 
As in the previous section, we modify an MCMC algorithm 
for simulating from {\PostFour}, the posterior under the canonical nonparametric prior, in order to obtain simulations from  {\PostThree}, the posterior under the marginally specified prior. Our particular MCMC scheme relies on the 
representation of a Dirichlet-distributed random variable as a set of independent gamma variables scaled to sum to one.  
That is, if $Z_c \sim \text{gamma}(\alpha_c,1)$ and $f_c = Z_c/\sum Z_{c'}$, then 
$f \sim \text{Dirichlet}(\alpha_1, \ldots, \alpha_{|\mathcal C|})$. 
We employ an MCMC algorithm that is based upon simulating proposed values of 
$\{\ln Z_c : c\in \mathcal C \}$ from a normal distribution centered at the 
current values. 
Because of the generally high dimension of the parameter $f$,
proposing changes to every element of $f$ simultaneously can 
result in low acceptance rates. 
To avoid this problem, at each iteration of the algorithm we propose changes
to randomly chosen subvectors of $f$.
The steps in a single iteration of the  MCMC algorithm are then as follows:
\begin{enumerate}
\item Generate a proposal $\{f^*, \theta_1^*,\ldots,\theta_p^*\}$:
\begin{enumerate}
\item randomly sample a set of cells $\mathcal C' \subset \mathcal C$; 
\item simulate proposals
     $\{ \log Z_c^* : c \in \mathcal C' \} = 
      \{ {\log Z_c} : c \in \mathcal C' \} + \epsilon $, $\epsilon \sim 
$ normal$(0,\delta I)$;
\item compute the corresponding $f^*$ and 
       marginal probabilities $\theta_1^*,\ldots, \theta_p^*$. 
\end{enumerate}
\item Compute the acceptance ratio $r=r_0r_1$ from 
$r_{0}$, the acceptance ratio for $f$ under $\pi_0$, and 
$r_1$, the marginal prior ratio:
\[ r_0 = \frac{p(y|\v f^*) \pi_0(\v Z^*)}     {p(y|\v f  ) \pi_0(\v Z  )}
  \prod_c (Z^*_c/Z_c )   \   \  ,   \ \ 
r_1 =  \frac{ p_1(\theta^*) /p_0(\theta^*) } { p_1(\theta) /p_0(\theta) }. \]
%
\item Accept $f^*, \theta_1^*, \ldots, \theta_p^*$ with probability 
$1 \wedge r$. 
\end{enumerate}
Note that the ratio $r_0$ includes the Jacobian of the transformation 
from $Z$ to $\ln Z$, as the proposal distribution 
is symmetric on the log-scale. 
The number of cells $|\mathcal C'|$ to update at each step and the 
variance parameter $\delta$ in the proposal distribution
can be adjusted to achieve target acceptance rates. 

As mentioned above, we take $p_1$ to be a product of 
Dirichlet densities representing prior information about the margins 
$\theta_1,\ldots, \theta_p$. To calculate $r_1$ we must also compute 
the corresponding joint distribution $p_0$ of $\theta_1,\ldots, \theta_p$ 
under the Dirichlet distribution $\pi_0$ on $f$. 
We approximate $p_0$ by the product of the prior  marginal
densities of $\theta_1,\ldots, \theta_p$  under $\pi_0$, each of  which are 
 Dirichlet. 
However, we 
note that the $\theta_j$'s are  only approximately 
independent of each other under $\pi_0$.

\subsection{Example: North Carolina PUMS data}
We evaluate the performance                                            
of the marginally specified prior 
and several associated priors in terms of their 
performance under the scenario of a researcher with accurate prior 
information about the marginal distributions of the $p$ categorical variables.
Our scenario is based on 
data from the Public Use Microdata Sample (PUMS) of the American Community Survey, a yearly demographic and economic survey.  We consider data 
on gender (male, female: $d_1=2$), citizenship (native, naturalized, non-citizen: $d_2=3$), primary language spoken (English, other: $d_3=2$), class of worker ($d_4=8$), and mode of transportation to work ($d_5=12$)
from 40,769 survey participants. 
The latter two  variables are each dominated by a single category, 
``employee of private company'' (63.75\%) for worker class  and 
``car, truck or van'' (91.97\%) for transportation. 
 These classifications yield a five-way contingency table with $|\mathcal C|=1,152$ cells. 
\label{S:CTSIMSTUDY}
From these data 
we constructed a ``true'' joint distribution $\tilde{ \v f}$ 
and marginal frequencies $\tilde \theta$ 
by filling out the multiway contingency table with the PUMS data, replacing zero counts in the contingency table with small fractional counts, and normalizing the resulting counts to produce a probability 
distribution over $|\mathcal C|$.
We then simulated smaller datasets of various sample sizes from $\tilde f$, 
 and obtained posterior 
estimates for each under three different prior 
distributions:
\begin{itemize}
\item Informative Dirichlet prior $\pi_0^I$:
A Dirichlet distribution with parameter $\alpha_I f_0^I$, where 
$\alpha_I =|\mathcal C| $ and $f_0^I$ is in the $(|\mathcal C|-1)$-simplex.
Using the method of \citet{csiszar_1975},
the
prior mean $f_0^I$ of $f$ was chosen to be the frequency vector 
closest in  Kullback-Leibler  divergence to the uniform distribution on $|\mathcal C|$
among  those with margins equal to 
$\tilde \theta$. 
The induced marginal prior on  each  $\margpri_j$ is then  $\text{Dir}(|\mathcal C|\tilde \theta_j)$, which has prior  expectation $\tilde \theta_j$ 
as desired. Note that the concentration hyperparameter $\alpha_I$ is the 
same as that for a uniform prior on the simplex. 
\item Noninformative Dirichlet prior $\pi_0^N$:
A Dirichlet distribution with parameter $\alpha_N  f_0^N$, 
where $\alpha_N= \sqrt{ |\mathcal C|}$ and $f_0^N = \{ 1/|\mathcal C|,\ldots, 
1/|\mathcal C|\}$. This prior has the same prior expectation 
as the uniform prior on the $(|\mathcal C|-1)$-simplex, 
but a smaller prior concentration by a factor of $\sqrt{|\mathcal C|}$. 
\item Marginally specified prior $\pi_1$: 
Constructed by replacing the marginal prior for $\theta$ induced by
 $\pi_0^N$ with the marginal prior 
under  $\pi_0^I$. 
\end{itemize}
%

We used the true joint distribution $\tilde{\v f}$ to generate $200$ replicate data sets of sizes $n\in \{$ 100, 1000, 5000, 10000, 20000, 40000 $\}$. 
The $\pi_0^I$ and  $\pi_0^N$ priors are conjugate to the multinomial likelihood, and so their posterior distributions are available in closed form.  
For estimation under $\pi_1$, the MCMC  algorithm described above was run for $3 \times 10^6$ iterations for each simulated dataset. 
The acceptance rate varied with the sample size $n$, from $89\%$ at $n=100$ down to $63\%$ at $n=10000$.
Effective sample sizes corresponding to thinned Markov chains 
based on every 500th iterate were obtained and were found 
to be around 1000 (based on thinned chains of length 6000). 

For each simulated dataset and prior we obtain posterior mean estimates
$( \hat f,\hat \theta)$ which we compare to the true values $(\tilde f, \tilde \theta)$ used to generate 
the simulated data. 
To evaluate $\hat \theta$, 
we use an average of the absolute value of the Kullback-Leibler divergence between the true marginal distributions $\{\tilde \theta_{1}, \ldots \tilde \theta_p \}$ and the estimated marginal distributions $\{ \hat \theta_{1}, \ldots \hat \theta_{p} \}$:
\begin{align*}
M = \frac{1}{\PUMSPEE} \sum_{j=1}^{\PUMSPEE} \bigg | \sum_{c=1}^{d_j} \tilde\theta_{jc} \; \ln \big(\hat\theta_{jc}/\tilde \theta_{jc} \big) \bigg |. 
\end{align*}
Smaller values of $M$ indicate better performance with respect to this  marginal metric.

To assess  the performance of $\hat f$ on aspects of $f$ other than the 
marginal distributions, we compared the true and estimated values of the local dependence functions 
(LDFs) of the $p \choose 2$ separate two-way marginal distributions. 
These LDFs describe the two-way dependencies among the variables, and 
are invariant to changes in the marginal distributions \citep{goodman_1969}. 
The LDFs are formed from cross-product ratios of $f$ as follows:
Letting $f^{j_1,j_2}_{c_1,c_2}=  \Pr( y_{j_1} = c_1, y_{j_2} = c_2|f)$, 
we define 
%
\begin{align*}
LDF^{j_1,j_2}_{c_1,c_2}(f)  = \ln \bigg( \frac{f^{j_1,j_2}_{c_1,c_2} \:\: f^{j_1,j_2}_{c_1+1,c_2+1}}
 {f^{j_1,j_2}_{c_1,c_2+1} \: f^{j_1,j_2}_{c_1+1,c_2}} \bigg). 
\end{align*}
For each simulated dataset and prior distribution, we 
computed the average squared error between 
$LDF^{j_1,j_2}_{c_1,c_2}(\hat f)$ and $LDF^{j_1,j_2}_{c_1,c_2}(\tilde f)$ 
as 
\begin{align*}
L = {p \choose 2}^{-1} \sum_{j_1 < j_2} \frac{1}{(d_{j_1}-1)(d_{j_2}-1)} \sum_{c_1=1}^{d_{j_1}-1} \sum_{c_2=1}^{d_{j_2}-1} (LDF^{j_1,j_2}_{c_1,c_2}(\hat f) - LDF^{j_1,j_2}_{c_1,c_2}(\tilde f))^2. 
\end{align*}
Smaller values of $L$ indicate better performance 
in terms of representing the two-way dependence structure of the 
true distribution $\tilde f$.

Figure \ref{Fi:CTSimStudy} shows the $M$ and $L$ performance metrics for each prior and simulated dataset, with the averages over simulations  at each sample size joined by lines.
The sample sizes are displayed ordinally, with a slight horizontal shift for each prior so that the results under different priors can be distinguished.  
\begin{figure}[ht]
\centerline{\includegraphics[width=5.75in]{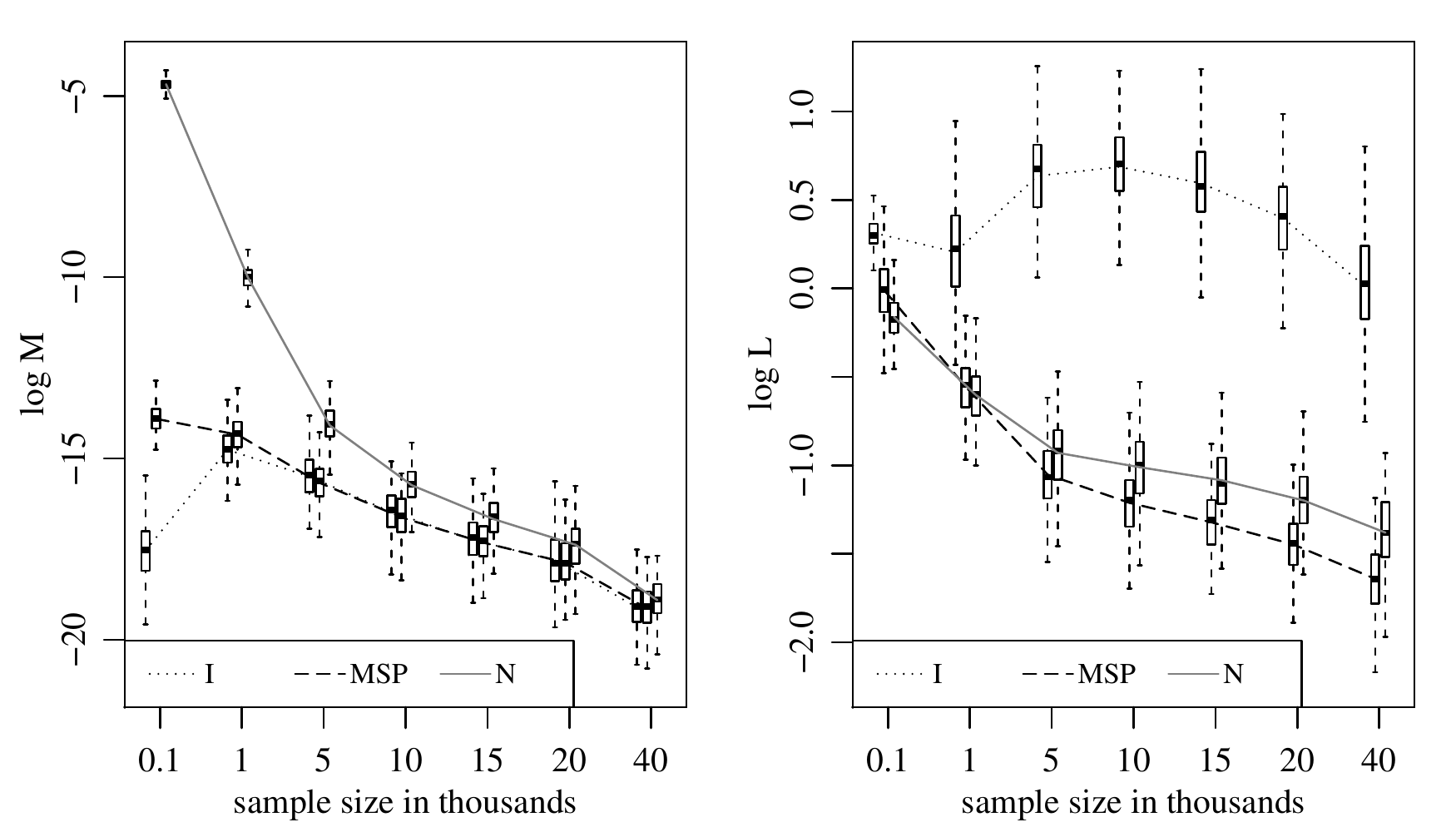} }
\caption{Comparison of $M$ and $L$ metrics  on the log scale for {\PriorOne} (I), {\PriorTwo} (N) and  {\PriorThree} (MSP) at various sample sizes.}
\label{Fi:CTSimStudy}
\end{figure}
Not surprisingly, the estimates of $\theta$  under
{\PriorOne}  and $\pi_1$  outperform  those under $\pi_0^N$, 
as these former two priors were designed to have correct 
prior expectations for $\theta$. 
(The initial non-monotonic trend in the performance of  $\pi_0^I$  
with sample size is due to the fact that $\pi_0^I$ has exactly 
correct prior expectation: If the sample size were zero
then $M$ would be zero as well).
In contrast, the second plot in Figure \ref{Fi:CTSimStudy} 
indicates that $\pi_0^I$ provides relatively poor estimates of the 
dependence functions:
At all sample sizes, this prior  underperforms compared to the other two, 
demonstrating the cost of making {\PriorOne} directly informative about the marginals. 
On the other hand, $\pi_0^N$ and $\pi_1$  have very comparable 
performance in terms of estimation of the dependence functions. 
These comparisons, using both the marginal and margin-free performance metrics,
  highlight the desirable properties of the marginally 
specified prior formulation: A marginally specified prior $\pi_1$ is 
able to represent prior information about specific functionals $\theta(f)$ of 
the high-dimensional parameter $f$ without being overly informative about 
other aspects of the parameter.

 
\section{Discussion}
Nonparametric priors 
for a high-dimensional parameter $f$
based on Dirichlet processes or
Dirichlet distributions do not easily facilitate
partial prior information
about arbitrary functionals $\theta=\theta(f)$. 
Attempts to make such priors informative about $\theta$ can generally 
make the prior undesirably informative about other aspects of $f$. 

In this article, we have presented a 
relatively simple 
 solution to this problem, 
via construction of a marginally specified prior (MSP)
that can induce a target marginal prior on a functional $\theta$, 
but is otherwise as close as possible to a given 
canonical ``noninformative'' nonparametric prior. 
We have provided general posterior approximation schemes for such priors, 
based on conceptually simple modifications to standard MCMC routines 
for canonical nonparametric priors.  In two examples we have shown 
that the MSP behaves as anticipated:
Given accurate prior information, 
the MSP provides 
improved estimation for $\theta$  as compared to 
``noninformative'' priors, while providing similar or better
estimation performance 
for other aspects of the unknown parameter $f$. 

One barrier to the adoption of MSPs is that the posterior approximation 
schemes we have presented require that the ratio $p_1(\theta)/p_0(\theta)$
be computable, where $p_1$ is the desired  informative prior for $\theta$
and $p_0$ is the prior induced on $\theta$ by a canonical prior $\pi_0$. 
Generally, $p_0$ will not have a closed form, and so must be 
approximated numerically or otherwise. If the dimension of $\theta$ is 
small, it should generally be feasible to approximate $p_0$ 
with a kernel density estimate, or by a simple parametric family. 
If $\theta$ is high-dimensional, then other approximation 
strategies may be required, such as approximating the joint density 
of $\theta$ as a product density (i.e.\ assuming independence of 
subvectors of $\theta$) or perhaps by using mixture models. 
While the latter strategy may be more flexible and accurate than the 
former, it may roughly double the modeling efforts in any given problem
by requiring one to essentially 
nonparametrically estimate $p_0$ before estimating $f$. 

Supplementary results and replication code for the material in Section 3
are available at the second author's website: \href{http://www.stat.washington.edu/~hoff}{\nolinkurl{www.stat.washington.edu/~hoff}}

\bibliographystyle{plainnat}
\bibliography{refs}

\end{document}